\newtheorem{theorem}{Тheorem}[section]
\newtheorem{lemma}{Lemma}
\newcommand{\nc}{\newcommand*}
\nc{\beq}{\begin{equation}}
\nc{\eeq}{\end{equation}}
\nc{\lb}[1]{\label{#1}}
\nc{\nn}{\nonumber}
\nc{\ts}{\textstyle}
\nc{\ds}{\displaystyle}
\nc{\lar}{\leftarrow}
\nc{\rar}{\rightarrow}
\nc{\Rar}{\Rightarrow}
\nc{\half}{\frac{1}{2}}
\nc{\reff}[1]{(\ref{#1})} 
\nc{\basc}[2]{\left[#1_{#2}\right]_{#2=0}^{\infty}}
\nc{\basb}[2]{\left\{#1_{#2}\right\}_{#2=0}^{\infty}}
\nc{\basbx}[3]{\left\{#1_{#2}(#3)\right\}_{#2=0}^{\infty}}
\nc{\basby}[4]{\left\{#1_{#2}(#4)\right\}_{#3=0}^{\infty}}
\def\wt{\widetilde}
\begin{document}

{\bf Borzov V.V.,  Damaskinsky E.V.}
\vspace{0.5cm}

\centerline{\Large\bf The discrete spectrum of Jacobi matrix}
\vspace{0.5cm}

\centerline{\Large\bf related to recurrence relations}
\vspace{0.5cm}

\centerline{\Large\bf with periodic coefficients\footnote{
{\it Key words}: Orthogonal polynomials, Jacobi matix, recurrent relations. \\
The work is partially supported by RFBR grant № 15-01-03148-а}}

\begin{flushright}
{\it Dedicated to Petr Kulish\qquad\qquad\qquad\qquad} \\
{\it in connection with the seventieth anniversary}.
\end{flushright}

\section{Introduction}\lb{s1}

The interest in the study of orthogonal polynomials defined by recurrence relations with periodic 
and asymptotically periodic coefficients increased after the appearance of the article \cite{1} 
about the properties of the periodic Toda lattices (see, for example, \cite{2}-\cite{12}).
In particular, in work \cite{13} recurrence relation with periodic coefficients are investigated.
It was shown that such polynomials can be described by the classical Chebyshev polynomials.
The aim of this work is to study the discrete spectrum of the Jacobi matrix,
connected with polynomials in this class, i.e. polynomials with periodic coefficients
in recurrent relations. As an example, we consider:

 a) the case when period $N$  of coefficients in recurrence relations equals
to three (as a particular case  we consider "parametric" Chebyshev polynomials \cite{17});

 b) the elementary   $N$-symmetrical Chebyshev polynomials
 ($N=3,4,5$), that was introduced  by authors in studying  the "composite model
 of generalized  oscillator" \cite{14}.

 Let us remind some necessary results from \cite{13}. We denote by
 $\basbx{\varphi}{n}{x}$ the polynomial sequence defined by recurrence relations
\beq\lb{de01}
\varphi_{n}(x)=(x+a_{n-1})\varphi_{n-1}(x)-b_{n-1}\varphi_{n-2}(x),
\quad n\geq 1,\quad\varphi_{0}(x)=1,\, \varphi_{-1}(x)=0,
\eeq
where the coefficients are periodic with period $N\geq2$:
\beq\lb{de02}
a_{n+N}=a_n,\qquad b_{n+N}=b_{n},\quad n\geq0.
\eeq
We will use the Chebyshev polynomials of the second kind defined  by recurrence relations
\beq\lb{de03}
tU_{n}(t)=U_{n+1}(t)+U_{n-1}(t),\quad n\geq 0,\quad U_{0}(t)=1,\, U_{-1}(t)=0.
\eeq
It was proved in \cite{13} that for any  $N\geq1$ the polynomial $\varphi_{N-1}(x)$ divides the polynomial  $\varphi_{2N-1}(x)$
\beq\lb{de04}
\varphi_{2N-1}(x)=\varphi_{N-1}(t)P_{N}(x),
\eeq
where  the polynomial $P_{N}(x)$ defined from equality \reff{de04}.
Besides, recurrence relations
\beq\lb{de06}
\varphi_n(x)=\varphi_{Nm+k}(x)=\varphi_{k+N}(x)U_{m-1}(P_N(x))-\varphi_k(x)U_{m-2}(P_N(x)).
\eeq
are fulfilled for $n=Nm+k,\, k=\overline{0;(N-1)},\, m\geq2$.
The Jacobi matrix $J=\basc{j^{(N)}}{i,k}$ related to recurrence relations    \reff{de01} one can write in following form
\beq\lb{de07}
j_{i,k}^{(N)}=A \delta_{i+1,k}+B \delta_{i,k}+C \delta_{i-1,k},
\eeq
where the matrixes
\small{
\begin{gather}
A=\begin{pmatrix}
0&0&\ldots&0\\
0&0&\ldots&0\\
\ldots&\ldots&\ldots&\ldots\\
1&0&\ldots&0
\end{pmatrix},\qquad
C=\begin{pmatrix}
0&0&\ldots&0&b_{N-1}\\
\ldots&\ldots&\ldots&\ldots&\ldots\\
0&0&\ldots&0&0\\
0&0&\ldots&0&0
\end{pmatrix}, \nn \\[.5cm]
B=\begin{pmatrix}
-a_0&1&0&\ldots&0&0\\
b_0&-a_1&1&\ldots&0&0\\
\hdotsfor{6}\\
0&0&0&\ldots&0&0\\
0&0&0&\ldots&-a_{N-2}&1\\
0&0&0&\ldots&b_{N-2}&-a_{N-1}
\end{pmatrix},\lb{de08}
\end{gather}
}have the size $N\times N.$ Let $X_{\mu}=(x_1,x_2,\dots,x_n,\ldots)^t\in \ell^2$ be an eigenvector
for matrix $J,$ corresponding to eigenvalue $\mu$:
\beq\lb{de08a}
(J-\mu I) X_{\mu}=0.
\eeq
The following necessary and sufficient condition is hold: a solution $\mu$
of the equation \reff{de08a}  is an eigenvalue of matrix  $J$  if and only if
when
\beq\lb{de09}
\|X_{\mu}\|_2^2=\sum_{n=1}^{\infty}|x_n|^2=\sum_{n=0}^{\infty}|\varphi_n(\mu)|^2<\infty
\eeq
In the next paragraph we will obtain the "critical equation" for Jacobi matrix
 $J$ . A solution of the equation \reff{de09} is called the "critical value" of
Jacobi matrix  $J$ . The necessary condition for $\mu$ to be an eigenvalue of matrix $J$ is that
$\mu$ be a "critical value" of  $J$ .

\section{The critical equation for Jacobi matrix $J$}\lb{s2}

Let us introduce the notation which was needed in the following:
\begin{subequations} \lb{de10}
\begin{eqnarray}
\sigma_{k,k+m}=\sum_{n=k}^{k+m}\varphi_{n}^2(\mu),\quad
S_1=\sigma_{N,2N-1},\quad S_2=\sigma_{0,N-1},\quad  S=S_1+S_2; \lb{de10-1} \\
S_n^N(\mu)=\sigma_{n,n+2N-1},\quad D_n^N(\mu)=\sum_{k=n}^{n+N-1}\varphi_k(\mu)\varphi_{k+N}(\mu)  \lb{de10-2} \\
\Delta_n^N(\mu)=S_n^N(\mu)-P_N(\mu)D_n^N(\mu),\quad n\geq0  \lb{de10-3}
\end{eqnarray} \end{subequations}
It is clear that $S=S_0^N(\mu).$
The following statements are hold:
\begin{lemma}\lb{l1}
Let the polynomial system  $\basbx{\varphi}{n}{x}$ is defined by recurrence relations \reff{de01} and
periodic conditions \reff{de02}. Then for any
 $n\geq 2N$ the following recurrence relations
\beq\lb{de11}
\varphi_n(x)=P_N(x)\varphi_{n-N}(x)-\varphi_{n-2N}(x).
\eeq
are fulfilled.
\end{lemma}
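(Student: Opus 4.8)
The plan is to derive \reff{de11} directly from the ready-made expansion \reff{de06}, using only the three-term recurrence \reff{de03} for the Chebyshev polynomials. Fix $n\geq 2N$ and write $n=Nm+k$ with $0\leq k\leq N-1$ and $m\geq 2$; then $n-N=N(m-1)+k$ and $n-2N=N(m-2)+k$ carry the same residue $k$, only with $m$ shifted down by $1$ and $2$. Applying \reff{de06} to each of $\varphi_n$, $\varphi_{n-N}$, $\varphi_{n-2N}$ writes all three as the same linear combination $\varphi_{k+N}(\mu)\,U_{j}(P_N(\mu))-\varphi_k(\mu)\,U_{j-1}(P_N(\mu))$ of the fixed pair $\varphi_{k+N},\varphi_k$, with coefficients that are consecutive Chebyshev polynomials of the second kind: $j=m-1$ for $\varphi_n$, $j=m-2$ for $\varphi_{n-N}$, and $j=m-3$ for $\varphi_{n-2N}$.

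Substituting these three expressions into the right-hand side of \reff{de11} and collecting the coefficients of $\varphi_{k+N}(\mu)$ and of $\varphi_k(\mu)$, the identity to be proved collapses to the two scalar relations $P_N(\mu)\,U_{m-2}(P_N(\mu))-U_{m-3}(P_N(\mu))=U_{m-1}(P_N(\mu))$ and $P_N(\mu)\,U_{m-3}(P_N(\mu))-U_{m-4}(P_N(\mu))=U_{m-2}(P_N(\mu))$, and these are nothing but \reff{de03} evaluated at $t=P_N(\mu)$ for the indices $m-2$ and $m-3$. Feeding them back in, the $\varphi_{k+N}$-coefficient becomes $U_{m-1}$ and the $\varphi_k$-coefficient becomes $U_{m-2}$, i.e. the right-hand side of \reff{de11} equals $\varphi_{k+N}(\mu)U_{m-1}(P_N(\mu))-\varphi_k(\mu)U_{m-2}(P_N(\mu))=\varphi_n(\mu)$ by \reff{de06} again. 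This finishes the main step, which is purely formal.

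The one place that needs care is the range of validity: \reff{de06} is stated for $m\geq 2$ and \reff{de03} for nonnegative index, so for $n$ in the first two blocks $2N\leq n<4N$ (that is, $m=2$ or $m=3$) the shifted indices drop to $1$ or $0$ and the symbols $U_{m-3},U_{m-4}$ acquire negative index. I would dispose of these few cases either by a direct check — for $m=2$ the asserted relation is literally \reff{de06} since $U_1(t)=t$ and $U_0(t)=1$, and for $m=3$ one expands once more with $U_2(t)=t^2-1$ — or, more uniformly, by adopting the standard conventions $\varphi_{-1}=0$ (already in \reff{de01}) together with $U_{-1}=0$ and $U_{-2}=-1$, the latter forced by \reff{de03}, under which both \reff{de06} and the two Chebyshev identities above continue to hold down to $m=2$. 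With this small bookkeeping settled, the general argument is the one-line computation above, and it is precisely this boundary accounting, rather than the algebra, that I expect to be the only obstacle.
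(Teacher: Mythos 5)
Your proof is essentially the same as the paper's: both substitute the expansion \reff{de06} for $\varphi_{n-N}$ and $\varphi_{n-2N}$ and collapse the result using the Chebyshev recurrence \reff{de03} at $t=P_N(x)$, recovering $\varphi_{k+N}U_{m-1}-\varphi_k U_{m-2}=\varphi_n$. Your extra bookkeeping for the small values $m=2,3$ (where $U_{m-3},U_{m-4}$ acquire negative index) addresses a point the paper passes over silently, and the conventions $U_{-1}=0$, $U_{-2}=-1$ are the right way to close it.
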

\begin{proof}[Proof]
From relations \reff{de03} and \reff{de06} we have ($n=Nm+k$)
\begin{multline*}
P_N(x)\varphi_{n-N}(x)-\varphi_{n-2N}(x)=\varphi_{k+N}(x)P_N(x)U_{m-2}(P_N(x))\\
-\varphi_{k}(x)P_N(x)U_{m-3}(P_N(x))-\varphi_{k+N}(x)U_{m-3}(P_N(x))+
\varphi_{k}(x)U_{m-4}(P_N(x))=\\
=\varphi_{k+N}(x)\left(U_{m-1}(P_N(x))+\rule[-5pt]{0pt}{20pt}U_{m-3}(P_N(x))\right)-
\varphi_{k}(x)\left(U_{m-2}(P_N(x))\rule[-5pt]{0pt}{20pt}+U_{m-4}(P_N(x))\right)\\
-\varphi_{k+N}(x)U_{m-3}(P_N(x))+\varphi_{k}(x)U_{m-4}(P_N(x))=\\
=\varphi_{k+N}(x)U_{m-1}(P_N(x))-\varphi_{k}(x)U_{m-2}(P_N(x))=\varphi_{Nm+k}(x)=\varphi_{n}(x).
\end{multline*}
\end{proof}
\begin{lemma}\lb{l2}
Let the polynomial system  $\basbx{\varphi}{n}{x}$ is defined by recurrence relations \reff{de01} and
periodic conditions \reff{de02}. Besides, let
 $\Delta_n=\Delta_n^N(\mu)$ is defined by the equalities
 \reff{de10-2} and \reff{de10-3}. Then for all $n\geq0$ the identity
\beq\lb{de12}
\Delta_n=\Delta_0
\eeq
is fulfilled.
\end{lemma}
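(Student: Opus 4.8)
The plan is to prove the a priori stronger statement that $\Delta_{n+1}=\Delta_n$ for every $n\geq0$; the assertion $\Delta_n=\Delta_0$ then follows by an immediate induction on $n$. So everything reduces to computing the first difference $\Delta_{n+1}-\Delta_n$ and checking that it vanishes.

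First I would write out the increments of the two sums that make up $\Delta_n=S_n^N(\mu)-P_N(\mu)D_n^N(\mu)$. Passing from $n$ to $n+1$ in $S_n^N(\mu)=\sum_{k=n}^{n+2N-1}\varphi_k^2(\mu)$ merely drops the term $\varphi_n^2(\mu)$ and appends $\varphi_{n+2N}^2(\mu)$, while in $D_n^N(\mu)=\sum_{k=n}^{n+N-1}\varphi_k(\mu)\varphi_{k+N}(\mu)$ it drops $\varphi_n(\mu)\varphi_{n+N}(\mu)$ and appends $\varphi_{n+N}(\mu)\varphi_{n+2N}(\mu)$. Hence, abbreviating $\varphi_j:=\varphi_j(\mu)$,
\[
S_{n+1}^N(\mu)-S_n^N(\mu)=\varphi_{n+2N}^2-\varphi_n^2,\qquad
D_{n+1}^N(\mu)-D_n^N(\mu)=\varphi_{n+N}\bigl(\varphi_{n+2N}-\varphi_n\bigr),
\]
so that $\Delta_{n+1}-\Delta_n=\varphi_{n+2N}^2-\varphi_n^2-P_N(\mu)\,\varphi_{n+N}\bigl(\varphi_{n+2N}-\varphi_n\bigr)$.

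The key step is to eliminate the factor $P_N(\mu)\varphi_{n+N}$ using Lemma~\ref{l1}. The recurrence \reff{de11} applied at the index $n+2N$ — which is legitimate precisely because $n\geq0$ guarantees $n+2N\geq 2N$ — gives $\varphi_{n+2N}=P_N(\mu)\varphi_{n+N}-\varphi_n$, i.e. $P_N(\mu)\varphi_{n+N}=\varphi_{n+2N}+\varphi_n$. Substituting this into the expression above,
\[
\Delta_{n+1}-\Delta_n=\varphi_{n+2N}^2-\varphi_n^2-(\varphi_{n+2N}+\varphi_n)(\varphi_{n+2N}-\varphi_n)=0 .
\]
Thus $\Delta_{n+1}=\Delta_n$ for all $n\geq0$, which proves \reff{de12}. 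I do not anticipate a genuine obstacle here: the argument is a one-step telescoping identity, and the only points that need a little care are verifying that Lemma~\ref{l1} may indeed be invoked at the shifted index $n+2N$, and then carrying out the (exact) cancellation — which works because the telescoped differences of $S_n^N$ and $D_n^N$ involve exactly the three polynomials $\varphi_n,\varphi_{n+N},\varphi_{n+2N}$ that the recurrence \reff{de11} ties together.
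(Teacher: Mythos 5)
Your proof is correct and follows essentially the same route as the paper: both reduce \reff{de12} to the one-step identity $\Delta_{n+1}=\Delta_n$, telescope the sums $S_n^N$ and $D_n^N$ down to the three terms $\varphi_n,\varphi_{n+N},\varphi_{n+2N}$, and then close the gap with the recurrence \reff{de11} of Lemma~\ref{l1}. The only (cosmetic) difference is that you substitute $P_N\varphi_{n+N}=\varphi_{n+2N}+\varphi_n$ and factor a difference of squares, whereas the paper substitutes the squared form of \reff{de11} into its equation \reff{de14}; these are the same cancellation.
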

\begin{proof}[Proof]
For proof this proposition by induction it is sufficiently to show that for all $n\geq0$ the equality
\beq\lb{de13}
\Delta_n=\Delta_{n+1}.
\eeq
is fulfilled. Taking into account \reff{de10-2} and \reff{de10-3}, for proof this relation it is sufficiently
to check the equality
\begin{multline*}
\varphi_{n}^2+\varphi_{n+1}^2+\ldots+\varphi_{n+2N-1}^2-P_N\left(\varphi_{n}\varphi_{n+N}+\varphi_{n+1}\varphi_{n+N+1}+
\ldots\varphi_{n+N-1}\varphi_{n+2N-1}\right)=\\
\varphi_{n+1}^2+\ldots+\varphi_{n+2N-1}^2+\varphi_{n+2N}^2-P_N\left(\varphi_{n+1}\varphi_{n+N+1}+\ldots+
\varphi_{n+N-1}\varphi_{n+2N-1}+\varphi_{n+N}\varphi_{n+2N}\right),
\end{multline*}
This equality is equivalent to the following relation
\beq\lb{de14}
\varphi_{n}^2-P_N\varphi_{n}\varphi_{n+N}=\varphi_{n+2N}^2-P_N\varphi_{n+N}\varphi_{n+2N}.
\eeq
From \reff{de11} it is follow that
\beq\lb{de15}
\varphi_{n+2N}^2=P_N^2\varphi_{n+N}^2-2P_N\varphi_{n}\varphi_{n+N}+\varphi_{n}^2.
\eeq
Substituting \reff{de15} in \reff{de14}, we see that the equality  \reff{de14} is fulfilled.
Therefore the equality  \reff{de13} is also hold.
\end{proof}

\begin{theorem}
For $\mu$ to be an eigenvalue of matrix $J$ defined by \reff{de07} and \reff{de08} it is  necessary that
$\mu$ be a solution of equation
\beq\lb{de16}
\Delta_0^N(\mu)=0.
\eeq
\end{theorem}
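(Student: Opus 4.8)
The plan is to read off the claim from the spectral criterion \reff{de09} together with the invariance proved in Lemma \ref{l2}. Assume $\mu$ is an eigenvalue of $J$. Then by \reff{de09} the series $\sum_{n=0}^{\infty}|\varphi_n(\mu)|^2$ converges; in particular its tails
\[
\varrho_n:=\sum_{k=n}^{\infty}|\varphi_k(\mu)|^2
\]
satisfy $\varrho_n\to0$ as $n\to\infty$ (and a fortiori $\varphi_n(\mu)\to0$).

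The next step is to bound $\Delta_n^N(\mu)$ by $\varrho_n$. The quantity $S_n^N(\mu)=\sum_{j=0}^{2N-1}\varphi_{n+j}^2(\mu)$ is a sum of $2N$ terms whose indices are all $\ge n$, so $|S_n^N(\mu)|\le\varrho_n$. For $D_n^N(\mu)=\sum_{k=n}^{n+N-1}\varphi_k(\mu)\varphi_{k+N}(\mu)$ the elementary inequality $2|ab|\le|a|^2+|b|^2$ gives $2|D_n^N(\mu)|\le\sum_{k=n}^{n+N-1}\bigl(|\varphi_k(\mu)|^2+|\varphi_{k+N}(\mu)|^2\bigr)\le 2\varrho_n$, hence $|D_n^N(\mu)|\le\varrho_n$. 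Consequently
\[
\bigl|\Delta_n^N(\mu)\bigr|=\bigl|S_n^N(\mu)-P_N(\mu)D_n^N(\mu)\bigr|\le\bigl(1+|P_N(\mu)|\bigr)\varrho_n\longrightarrow 0,\qquad n\to\infty .
\]
Finally, Lemma \ref{l2} asserts $\Delta_n^N(\mu)=\Delta_0^N(\mu)$ for every $n\ge0$; letting $n\to\infty$ in the last display forces $\Delta_0^N(\mu)=0$, which is precisely \reff{de16}.

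There is essentially no hard step here once Lemmas \ref{l1} and \ref{l2} are available — the argument is a one-line passage to the limit. The only point deserving a moment's attention is that $S_n^N$ and $D_n^N$ each consist of a number of summands that does \emph{not} grow with $n$ (namely $2N$ and $N$), while every index appearing in them tends to infinity, which is what lets the tail $\varrho_n$ dominate them both. It is also worth stressing that the condition obtained is only necessary: vanishing of $\Delta_0^N(\mu)$ is an algebraic identity satisfied by the whole spectrum, and singling out those of its roots that genuinely yield a square-summable $X_\mu$ requires the further analysis carried out below.
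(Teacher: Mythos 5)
Your proof is correct, but it reaches \reff{de16} by a genuinely different route than the paper. The paper's proof works with the block sums $\sigma_{kN,(k+1)N-1}$: it derives the three-term recurrence \reff{de19} (using Lemma~\ref{l2} to keep the constant $\Delta_0^N$ in every step), telescopes it into the identity \reff{de20}, and then argues that if $\sum_k|\varphi_k(\mu)|^2<\infty$ the left-hand side of \reff{de20} stays bounded while the right-hand side contains the linearly growing term $2(n-1)\Delta_0^N$, which forces $\Delta_0^N(\mu)=0$. You instead exploit Lemma~\ref{l2} directly: since $S_n^N$ and $D_n^N$ are windows of a \emph{fixed} number of terms ($2N$ and $N$) all of whose indices tend to infinity with $n$, the tail $\varrho_n$ of the convergent series dominates them, so $\Delta_n^N(\mu)\to0$, and shift-invariance $\Delta_n^N=\Delta_0^N$ finishes the argument. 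Your estimates are sound for complex coefficients (you correctly distinguish $\varphi_k^2$ from $|\varphi_k|^2$ and bound $|S_n^N|$, $|D_n^N|$ by the triangle inequality and $2|ab|\le|a|^2+|b|^2$), and the premise that an eigenvalue makes the series converge is exactly \reff{de09}. What your shortcut loses is the identity \reff{de20} itself, which the paper also uses to extract the auxiliary necessary condition \reff{de21} ($\sigma_{(n-1)N,nN-1}\to0$) that is invoked repeatedly in Sections~\ref{s3} and~\ref{s4} to rule out critical values; what it gains is a two-line limit argument free of any telescoping bookkeeping.
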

\begin{proof}[Proof]
Let us denote
\beq\lb{de17}
\sigma_{k,k+m}=\sum_{n=k}^{k+m}\varphi_{n}^2(\mu).
\eeq
It follows from \reff{de10}, \reff{de11} that
\begin{gather*}
\sigma_{0,2N-1}=S,\\
\sigma_{2N,3N-1}=\left(P_N\varphi_{N}-\varphi_{0}\right)^2+\ldots+\left(P_N\varphi_{2N-1}\varphi_{N-1}\right)^2=
P_N^{\,\,2}S_1+S_2-2P_ND_0^N(\mu).
\end{gather*}
From  \reff{de10-3} it follows that
\begin{equation*}
   S-P_ND_0^N=\Delta_0^N\quad\Rightarrow\quad P_ND_0^N=S-\Delta_0^N,
\end{equation*}
so we have
\begin{equation}\label{de18-1}
\sigma_{2N,3N-1}=2\Delta_0^N+\left(P_N^{\,\,2}-2\right)S_1-S_2.
\end{equation}
Analogously, using lemma 2, we obtain
\begin{equation}\label{de18-2}
\sigma_{3N,4N-1}=2\Delta_0^N+\left(P_N^{\,\,2}-2\right)\sigma_{2N,3N-1}-\sigma_{N,2N-1}.
\end{equation}
Than we have
\begin{equation}\label{de19}
\sigma_{kN,(k+1)N-1}=2\Delta_0^N+\left(P_N^{\,\,2}-2\right)\sigma_{(k-1)N,kN-1}-\sigma_{(k-2)N,(k-1)N-1}.
\end{equation}
Summing these equalities in $k$, we have
\begin{multline*}
\sigma_{0,nN-1}=
\sum_{k=0}^{nN-1}\varphi_{k}^{\,\,2}
=S+2\Delta_0^N+\left(P_N^{\,\,2}-2\right)S_1-S_2\\
+\sum_{k=3}^{n}\left(2\Delta_0^N+\left(P_N^{\,\,2}-2\right)\sigma_{(k-1)N,kN-1}-\sigma_{(k-2)N,(k-1)N-1}\right)
=\\
=2(n-1)\Delta_0^N+\left(P_N^{\,\,2}-1\right)S_1
+\left(P_N^{\,\,2}-2\right)\left(\sigma_{0,nN-1}-S\right)
-\sigma_{0,nN-1}+S_2+\sigma_{(n-1)N,nN-1}.
\end{multline*}
From here we obtain the relation
\begin{equation}\label{de20}
(4-P_N^{\,\,2})\sum_{k=0}^{nN-1}\varphi_{k}^{\,\,2}=2(n-1)\Delta_0^N+S_1+(3-P_N^{\,\,2})S_2+\sigma_{(n-1)N,nN-1}.
\end{equation}
for finding the quantity
$\sigma_{0,nN-1}=\sum_{k=0}^{nN-1}\varphi_{k}^{\,\,2}$ .
It is clear that if $\sum_{k=0}^{nN-1}\varphi_{k}^{\,\,2}(\mu)<\infty,$ then
\begin{equation}\label{de21}
\lim_{n\rightarrow\infty}\sigma_{(n-1)N,nN-1}=0.
\end{equation}
Then it follows from \reff{de20} that the series
 $\sum_{k=0}^{nN-1}\varphi_{k}^{\,\,2}(\mu)$ is convergent if
\begin{equation*}
\Delta_0^N(\mu)=0.
\end{equation*}
\end{proof}
\bigskip

{\bf Remark 1.} All eigenvalues of the matrix $J$ must satisfy the "critical equation" \, \reff{de16}.

But it is possible that some critical values of matrix $J$ are not satisfied the necessary and sufficient
condition \reff{de09}, i.e.
the corresponding vector $X_{\mu}$ are not belonging to $\ell_2.$
\bigskip

{\bf Remark 2.} Apparently for any $N\geq2$
 the polynomial  $\varphi_{N-1}(\mu)$  divides the  $\Delta_0^N(\mu)$, i.e.the equality
\begin{equation}\label{de22}
\Delta_0^N(\mu)=\varphi_{N-1}(\mu)\,Q_N(\mu),
\end{equation}
is true.
Then the "critical equation"  \reff{de16} splits into two equations
\begin{equation}\label{de23}
\varphi_{N-1}(\mu)=0,
\end{equation}
and
\begin{equation}\label{de24}
Q_N(\mu)=0.
\end{equation}
Apparently that one can obtain a simple condition that a solution $\mu$ of equation \reff{de23}
is an eigenvalue of matrix $J$. But we are not a success to
get a sufficient condition that a solution $\mu$ of equation \reff{de24}
is an eigenvalue of matrix $J$, which is more simply than the condition \reff{de21}.
\bigskip

For illustration we consider a few examples. As the first
example we consider the matrix $J$ defined by
\reff{de07}, \reff{de08} for $N=3,\, b_0=b_1=b_2=1$ and for any complex
 $a_0,\, a_1,\, a_2.$

\section{The case $N=3$. The parametric Chebyshev polynomials}\lb{s3}

{\bf 1}. Let us consider the generalized Chebyshev polynomials system
$\basbx{\varphi^{(3)}}{n}{x}$ defined by recurrence relations
\begin{equation}\label{de25}
x\varphi^{(3)}_{n}(x)=\varphi^{(3)}_{n+1}(x)+a_{n}\varphi^{(3)}_{n}(x)+\varphi^{(3)}_{n-1}(x),
\quad\varphi^{(3)}_{0}(x)=1,\, \varphi^{(3)}_{-1}(x)=0.
\end{equation}
The coefficients $a_n$ --- complex numbers that are fulfilled the periodicity condition \reff{de02} with $N=3$.
It is follows from \reff{de25} that the first six polynomials are
\begin{align}
\varphi^{(3)}_{0}&=1& \varphi^{(3)}_{1}&=x-a_0& \varphi^{(3)}_{2}&=(x-a_1)\varphi^{(3)}_{1}-1\nn \\
\varphi^{(3)}_{3}&=(x-a_2)\varphi^{(3)}_{2}-\varphi^{(3)}_{1}& \varphi^{(3)}_{4}&=(x-a_0)\varphi^{(3)}_{3}-
\varphi^{(3)}_{2}& \varphi^{(3)}_{5}&=\varphi^{(3)}_{2}(\varphi^{(3)}_{3}-(x-a_1)),\label{de26}
\end{align}
and for $n\geq6$ they can be calculated by formula
\begin{equation}\label{de27}
\varphi^{(3)}_{n}(x)=P_3(x)\,\varphi^{(3)}_{n-3}(x)-\varphi^{(3)}_{n-6}(x).
\end{equation}
From \reff{de04} and last relation in \reff{de26} it follows that
\begin{equation}\label{de28}
P_3(x)=\varphi^{(3)}_{3}(x)-(x-a_1)=x^3-(a_0+a_1+a_2)x^2+(a_0a_2+a_1a_2+a_0a_1-3)x-a_0a_1a_2+(a_0+a_1+a_2).
\end{equation}
Note that under additional condition $(a_0+a_1+a_2)=0$ the polynomial $P_3(x)$
has more simply form
\begin{equation}\label{28-1}
P_3(x)=x^3+(a_0a_2+a_1a_2+a_0a_1-3)x-a_0a_1a_2.
\end{equation}
Let us find the eigenvalues of the matrix $J^{(3)}=\basc{j^{(3)}}{j,k}$ corresponding to the recurrence relations \reff{de25}
\begin{equation}\label{de29}
j^{(3)}_{i\,k}=B_3\delta_{i+1,k}+A_3\delta_{i,k}+B_3^t\delta_{i-1,k},
\end{equation}
where
\begin{equation}\label{de30}
A_3=\begin{pmatrix}
a_0&1&0\\
1&a_1&1\\
0&1&a_2
\end{pmatrix},\qquad
B_3=\begin{pmatrix}
0&0&0\\
0&0&0\\
1&0&0
\end{pmatrix}.
\end{equation}
 Using the formulas
 \reff{de10-1}-\reff{de10-3}, \reff{de25} and \reff{de26},
one can to write the left-hand side of the equation \reff{de16} in the following form
\begin{equation}\label{de31}
S_0^3(\mu)-D_0^3(\mu)\,P_3(\mu)=\varphi^{(3)}_{2}(\mu)\,Q_3(\mu),
\end{equation}
where
\begin{equation}\label{de32}
Q_3(\mu)=\left(1+\left(\varphi^{(3)}_{1}(\mu)\right)^2\right)(\mu-a_1)(\mu-a_2)-2+\varphi^{(3)}_{2}(\mu)
\left(1-\varphi^{(3)}_{1}(\mu)(\mu-a_2)\right).
\end{equation}
In result the equation \reff{de16} splits into pair of equations
\begin{equation}\label{de33}
\varphi^{(3)}_{2}(\mu)=\mu^2-(a_0+a_1)\mu+a_0a_1-1=0,
\end{equation}
\begin{equation}\label{de34}
Q_3(\mu)=3\mu^2-2(a_0+a_1+a_2)\mu+(a_0a_2+a_1a_2+a_0a_1-3)=0.
\end{equation}
The roots $\mu_{1,2}$ of the equations \reff{de33} have the following form
\begin{equation}\label{de35}
\mu_{1,2}=\mu^{\pm}=\half(a_0+a_1)\pm\half\sqrt{4+(a_1-a_0)^2},
\end{equation}
and the roots of the equations \reff{de34} equal to
\begin{equation}\label{de36}
\mu_{3,4}=\nu^{\pm}=\frac{1}{3}\left((a_0+a_1+a_2)\pm\sqrt{a_0^2+a_1^2+a_2^2-a_0a_1-a_1a_2-a_0a_2+9}\right).
\end{equation}
In the case $(a_0+a_1+a_2)=0$ the roots $\mu_{3,4}$ are simplified
\begin{equation}\label{de37}
\mu_{3,4}=\pm\sqrt{1+\frac{a_0^2+a_1^2+a_2^2}{6}}
\end{equation}
It remains to find those critical values
 $\mu_k,\, (k=1,2,3,4)$ which are  an eigenvalues of Jacobi matrix $J^{(3)}$.
\begin{lemma}
For that a root $\mu_k,\, (k=1,2)$ \reff{de35} of the equation \reff{de33}
is an eigenvalues of the matrix  $J^{(3)}$ \reff{de29}, \reff{de30},
it is necessary and sufficient that the following inequality
\begin{equation}\label{de38}
\left|\mu_k-a_0\right|<1,\quad k=1,2,
\end{equation}
is fulfilled.
\end{lemma}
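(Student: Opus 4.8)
The starting point is the necessary and sufficient condition \reff{de09}: $\mu_k$ is an eigenvalue of $J^{(3)}$ precisely when $\sum_{n\geq0}\bigl|\varphi^{(3)}_n(\mu_k)\bigr|^2<\infty$, so the whole argument reduces to describing the sequence $\{\varphi^{(3)}_n(\mu_k)\}$ under the hypothesis $\varphi^{(3)}_2(\mu_k)=0$. The plan is to split this sequence into the three residue classes of $n$ modulo $3$ — each of which satisfies the three term recurrence \reff{de27} — and to treat them one at a time.

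First I would show that the class $n\equiv2$ modulo $3$ vanishes identically at $\mu_k$. By the last formula in \reff{de26} together with \reff{de28} we have $\varphi^{(3)}_5=\varphi^{(3)}_2\,P_3$, so $\varphi^{(3)}_5(\mu_k)=0$ as well. Since the subsequence $w_m=\varphi^{(3)}_{3m+2}(\mu_k)$ satisfies $w_m=P_3(\mu_k)w_{m-1}-w_{m-2}$ for $m\geq2$ by \reff{de27}, while $w_0=\varphi^{(3)}_2(\mu_k)=0$ and $w_1=\varphi^{(3)}_5(\mu_k)=0$, it follows that $\varphi^{(3)}_{3m+2}(\mu_k)=0$ for all $m\geq0$.

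Next I would compute the remaining two subsequences in closed form. Put $q:=\mu_k-a_0$. The relation $\varphi^{(3)}_2(\mu_k)=0$, i.e. \reff{de33}, reads $(\mu_k-a_0)(\mu_k-a_1)=1$, so $q\neq0$ and $\mu_k-a_1=q^{-1}$. Using the explicit polynomials \reff{de26} with $\varphi^{(3)}_2(\mu_k)=0$ gives $\varphi^{(3)}_0(\mu_k)=1$, $\varphi^{(3)}_1(\mu_k)=q$, $\varphi^{(3)}_3(\mu_k)=-q$, $\varphi^{(3)}_4(\mu_k)=-q^2$, and from \reff{de28},
\[
P_3(\mu_k)=\varphi^{(3)}_3(\mu_k)-(\mu_k-a_1)=-\bigl(q+q^{-1}\bigr).
\]
Hence the characteristic polynomial of the recurrence \reff{de27}, namely $\lambda^2-P_3(\mu_k)\lambda+1=\lambda^2+(q+q^{-1})\lambda+1=(\lambda+q)(\lambda+q^{-1})$, has roots $-q$ and $-q^{-1}$. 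Solving \reff{de27} for $y_m=\varphi^{(3)}_{3m}(\mu_k)$ with $(y_0,y_1)=(1,-q)$ and for $z_m=\varphi^{(3)}_{3m+1}(\mu_k)$ with $(z_0,z_1)=(q,-q^2)$, one finds
\[
\varphi^{(3)}_{3m}(\mu_k)=(-q)^m,\qquad\varphi^{(3)}_{3m+1}(\mu_k)=q\,(-q)^m,\qquad m\geq0
\]
(the degenerate case $q^2=1$, where $-q$ is a double root of the characteristic polynomial, is checked directly from \reff{de27}).

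Finally, assembling the three residue classes,
\[
\sum_{n=0}^{\infty}\bigl|\varphi^{(3)}_n(\mu_k)\bigr|^2=\bigl(1+|q|^2\bigr)\sum_{m=0}^{\infty}|q|^{2m},
\]
which converges if and only if $|q|<1$, that is $|\mu_k-a_0|<1$. By \reff{de09} this is exactly the condition for $\mu_k$ to be an eigenvalue of $J^{(3)}$, which is \reff{de38}. I expect the only delicate point to be the bookkeeping ensuring that the closed forms $(-q)^m$ and $q(-q)^m$ are valid for \emph{every} $m\geq0$: one must match them against the low order polynomials \reff{de26}, verify that \reff{de27} is the governing recurrence starting from $n=6$ for each subsequence, and dispose of the coincident root case; the rest is a routine computation.
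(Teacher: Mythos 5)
Your proof is correct and takes essentially the same route as the paper: the paper's proof consists of the single identity $\sum_{j}\bigl(\varphi^{(3)}_{j}(\mu_k)\bigr)^2=1+2(\mu_k-a_0)^2\sum_{j}(\mu_k-a_0)^{2j}$ followed by the geometric-series criterion, and your mod-3 decomposition with the closed forms $(-q)^m$, $q(-q)^m$, $0$ is exactly the (omitted) derivation of that identity. The only difference is that you supply the bookkeeping — the vanishing of the class $n\equiv2$, the factorization $(\lambda+q)(\lambda+q^{-1})$ of the characteristic polynomial of \reff{de27}, and the degenerate case $q^2=1$ — which the paper leaves implicit.
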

\begin{proof}[Proof]
In fact, we have for  $\mu_k,\, (k=1,2)$ defined by \reff{de35} the following relation
\begin{equation*}
\sum_{j=0}^{\infty}\left(\varphi^{(3)}_{j}(\mu_k)\right)^2=1+2(\mu_k-a_0)^2\sum_{j=0}^{\infty}(\mu_k-a_0)^{2j}.
\end{equation*}
The series in the right-hand side of this relation is convergent if and only if
the following inequality
\begin{equation*}
 \left|\mu_k-a_0\right|<1,\quad k=1,2,
\end{equation*}
is true.
\end{proof}
Unfortunately, for $\mu_k\ (k=3,4)$ even in simplest case
 $(a_0+a_1+a_2)=0$ one cannot find a more simply condition that
 $\mu_k$ is an eigenvalues of the matrix  $J^{(3)}$ than the following
condition
\begin{equation*}
\left[ \left(\varphi^{(3)}_{3n}(\mu_k)\right)^2+\left(\varphi^{(3)}_{3n+1}(\mu_k)\right)^2+
\left(\varphi^{(3)}_{3n+2}(\mu_k)\right)^2 \right]\rightarrow 0
\end{equation*}
as $n\rightarrow\infty$ (but this is the condition \reff{de21}).
\bigskip

{\bf 2}. We consider the parametric Chebyshev polynomials introduced in \cite{6} as example
to the case when the polynomial
$Q_3(\mu)$ has not roots. These polynomials  $\basbx{\Psi}{n}{x;\alpha}$ are defined by recurrent relations
with coefficients depending on a parameter  $\alpha\in[-1,1]$. These coefficients are
\begin{subequations}\lb{41*}   \begin{eqnarray}
a_0(\alpha)=\frac{i\sqrt{3}}{2}(\alpha+1)(3\alpha-2),\quad a_1(\alpha)=-i\sqrt{3}\alpha,
\quad a_3(\alpha)=-\frac{i\sqrt{3}}{2}(\alpha-1)(3\alpha-2),\\
a_{n+3}(\alpha)=a_n(\alpha),\quad n\geq 0,\qquad  \qquad\qquad\qquad\qquad\qquad
\end{eqnarray} \end{subequations}
It is clear, that  the following equality
\beq\lb{42*}
a_0(\alpha)+a_1(\alpha)+a_2(\alpha)=0,
\eeq
is true.
From \reff{de27}-\reff{de30}, \reff{41*} it is follow that
\begin{align}\lb{43*}
\Psi_0(x;\alpha)&=1;\nn\\
\Psi_1(x;\alpha)&=x-\frac{i\sqrt{3}}{2}(\alpha+1)(3\alpha+2);\nn\\
\Psi_2(x;\alpha)&=x^2-\frac{}{2}(\alpha-1)(3\alpha+2)x+(\frac{3}{2}\alpha(\alpha+1)(3\alpha-2)-1);\nn\\
\Psi_3(x;\alpha)&=x^3+(1-\wt{\alpha}_1^2)x+i\sqrt{3}\alpha(1-\wt{\alpha}_2^2);\\
\Psi_4(x;\alpha)&=(x-a_0(\alpha))\Psi_3(x;\alpha)-\Psi_2(x;\alpha);\nn\\
\Psi_5(x;\alpha)&=\Psi_2(x;\alpha)P_3(x;\alpha),\nn
\end{align}
where are used notations
\begin{subequations}\lb{44*}   \begin{eqnarray}
\wt{\alpha}_1^2=\frac{27}{4}\alpha^2(1-\alpha^2),\quad \wt{\alpha}_2^2=\frac{3}{4}(1-\alpha^2)(9\alpha^2-4)\\
P_3(x;\alpha)=x^3-\wt{\alpha}_1^2x-i\sqrt{3}\alpha\wt{\alpha}_2^2.\qquad\qquad\quad
\end{eqnarray} \end{subequations}
For $n\geq6$ one can calculate the polynomials $\Psi_n(x;\alpha)$ by formulas
\beq\lb{45*}
\Psi_n(x;\alpha)=P_3(x;\alpha)\Psi_{n-3}(x;\alpha)-\Psi_{n-6}(x;\alpha).
\eeq
In the paper \cite{17} we obtain the continuous spectrum of the Jacobi matrix
 $J^{(3)}(\alpha)$, corresponding to the parametric Chebyshev polynomials. The
support of the continuous spectrum is represented on the Fig.1, where are used
the notations
\beq\lb{46*}
\lambda_k=\lambda(\alpha)e^{i\frac{2k\pi}{3}},\qquad
\wt{\lambda}_k=\lambda(\alpha)e^{i\frac{(2k+3)\pi}{3}},\quad k=0,1,2.
\eeq
\begin{figure}[tb]
\centering
\includegraphics[width=6.0cm]{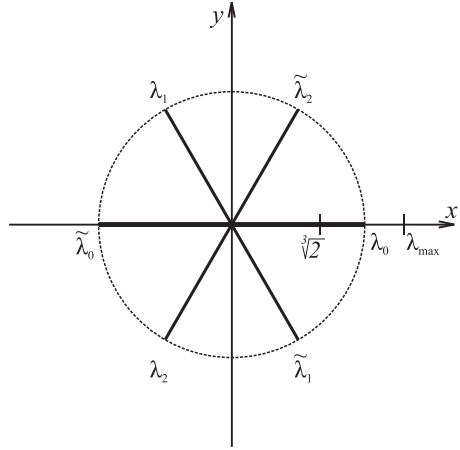}
\caption{The support of the continuous spectrum of the Jacobi matrix  $J^{(3)}(\alpha)$}\label{fig1}
\end{figure}
The number $\lambda(\alpha)\geq0$ is a positive root of the equation
\beq\lb{47*}
\lambda^3-\wt{\alpha}_1^2\lambda-2=0,
\eeq
where
\beq\lb{48*}
\sqrt[3]{2}\leq \lambda(\alpha) \leq \lambda_{max},
\eeq
and
\beq\lb{49*}
\lambda_{max}=\sqrt[3]{1+\sqrt{1-\left(\frac{27}{64}\right)^2}}+\sqrt[3]{1-\sqrt{1-\left(\frac{27}{64}\right)^2}}.
\eeq
Now we consider the discrete spectrum of the matrix $J^{(3)}(\alpha)$. From\reff{de37} and \reff{41*}
we find roots of the equation \reff{de35}
\beq\lb{50*}
\mu_{1,2}(\alpha)=\frac{i\sqrt{3}}{4}(\alpha-1)(3\alpha+2)\pm\sqrt{1-\frac{3}{16}f^2(\alpha)},
\eeq
where
\beq\lb{51*}
f(\alpha)=3\alpha^2+3\alpha-2,
\eeq
and
\beq\lb{52*}
\mu_{1,2}(\alpha)-a_1(\alpha)=-\frac{i\sqrt{3}}{4}f^2(\alpha)\pm\sqrt{1-\frac{3}{16}f^2(\alpha)}.
\eeq
 From lemma 3 it follows that $\mu_k,\, k=1,2$ is an eigenvalue of the matrix $J^{(3)}(\alpha)$ if and
only if the  inequality \reff{de38} is true. We introduce notations
\beq\lb{53*}
\alpha_1=-\half-\sqrt{\frac{1}{4}-\frac{4-2\sqrt{3}}{3\sqrt{3}}},\,
\alpha_2=-\half+\sqrt{\frac{1}{4}-\frac{4-2\sqrt{3}}{3\sqrt{3}}},\,
\alpha_3=-\half+\sqrt{\frac{1}{4}+\frac{4+2\sqrt{3}}{3\sqrt{3}}},
\eeq
and made justification test of the  inequality
 \reff{de38} on intervals
$$
[-1,\alpha_1],\quad (\alpha_1,\alpha_2),\quad [\alpha_2,\alpha_3],\quad (\alpha_3,1].
$$
As a result we have

а) The numbers  $(\mu_{1,2})(\alpha)$ are not eigenvalues of the matrix $J^{(3)}(\alpha)$ as $\alpha\in[-1,\alpha_1)$
or $\alpha\in[\alpha_2,\alpha_3]$.

b) $J^{(3)}(\alpha)$ has only one eigenvalue $\mu_{2}(\alpha)$ as
 $\alpha\in(\alpha_1,\alpha_2)$.

c) $J^{(3)}(\alpha)$ has only one eigenvalue $\mu_{1}(\alpha)$ as
 $\alpha\in(\alpha_3,1]$.

Now we consider the solutions $\mu_{3,4}(\alpha)$ of the equation \reff{de36},
that has in this case the following form
\beq\lb{54*}
Q_3(\mu;\alpha)=3\mu^2-\wt{\alpha}_1^2=0
\eeq
These solutions are equal to
\beq\lb{55*}
\mu_{3,4}(\alpha)=\pm\frac{\wt{\alpha}_1}{\sqrt{3}}.
\eeq
To proof that  $\mu_{3,4}(\alpha)$ are not eigenvalues of the matrix
 $J^{(3)}(\alpha)$ as $\alpha\in[-1,1]$ it is sufficient to check that the
necessary condition \reff{de21} is broken.
Using the recurrence relations \reff{de11} (and notation \reff{de10})
it is easy to show that
$$
S^N_{3n}(\mu)+S^N_{3(n-1)}(\mu)-S(\mu)-S^N_{3}(\mu)=P_3^2(\mu;\alpha)\left(\sigma_{3n,3n+2}(\mu)-S_{2}(\mu)\right).
$$
Then
\beq\lb{56*}
\sum_{s=-1}^1\sigma_{3(n+s),3(n+s)+2}(\mu)+(1-P_3^2(\mu;\alpha))\sigma_{3n,3n+2}(\mu)=
S_2+S_1(2-P_3^2(\mu;\alpha))+\sigma_{6,6+2}(\mu)=A_0(\alpha).
\eeq
The direct calculation shows that for any
$\alpha\in(-1,0)\cup(0,1]$
one have $A_0(\alpha)\neq0$.
At the same time the left-hand side of equality \reff{56*} tends to zero
as $n\rightarrow\infty$ if the nesessary condition \reff{de21} is fulfilled.
From this it follows that $A_0(\alpha)=0$. Thus $\mu_{3,4}(\alpha)$ are not
eigenvalues of the Jacobi matrix $J^{(3)}(\alpha)$ as $\alpha\in(-1,0)\cup(0,1]$.
 From results of \cite{18} it follows that $\mu_{3,4}(0)$ and $\mu_{3,4}(1)$
are not eigenvalues of the Jacobi matrixes $J^{(3)}(0)$ and $J^{(3)}(1)$ too.
Therefore, $\mu_{3,4}(\alpha)$ are not eigenvalues of the Jacobi matrix
$J^{(3)}(\alpha)$ as $\alpha\in[-1,1]$.
\bigskip

So the Jacobi matrix   $J^{(3)}(\alpha)$ has only one eigenvalue $\mu_{1}(\alpha)$
for $\alpha\in(\alpha_3,1],$ and only one eigenvalue $\mu_{2}(\alpha)$ for $\alpha\in(\alpha_1,\alpha_2)$.
\bigskip

Further, as another example, we consider the Jacobi matrix for the elementary
$N$-symmetric Chebyshev polynomials, which belongs to the type of polynomials
under consideration. These polynomials appear in studying of "compound model of
generalized oscillator" \cite{14}-\cite{16}. We consider only cases
 $N=3,4,5,$ since it was shown in the paper \cite{14} that such polynomials not
exist for $n\geq6$.

\section{Elementary $N$-symmetric Chebyshev polynomials}\lb{s4}

Elementary $N$-symmetric Chebyshev polynomials $\basbx{\varphi^N}{n}{x}$ \cite{14} are
defined by recurrence relations
\begin{equation}\label{de39}
x\varphi^N_{n}(x)=\varphi^N_{n+1}(x)+a_{n}\varphi^N_{n}(x)+\varphi^N_{n-1}(x),
\quad\varphi^N_{0}(x)=1,\, \varphi^N_{-1}(x)=0,
\end{equation}
where coefficients $a_n$ for $N=3,4,5,$ given by formulas
\begin{subequations} \lb{de40}
\begin{eqnarray}
a_0^{(3)}=i\sqrt{3},\, a_1^{(3)}=i\sqrt{3},\, a_2^{(3)}=0,\,\, a_{n+3}^{(3)}=a_n^{(3)},\, n\geq0;\\
a_0^{(4)}=2i,\, a_1^{(4)}=0,\, a_2^{(4)}=-2i,\, a_3^{(4)}=0,\,\, a_{n+4}^{(4)}=a_n^{(4)},\, n\geq0;\\
a_0^{(5)}=a_2^{(5)}=a_3^{(5)}=0,\, a_1^{(5)}=i\sqrt{5},\, a_4^{(5)}=-i\sqrt{5},\,\, a_{n+5}^{(5)}=a_n^{(5)},\, n\geq0.
\end{eqnarray} \end{subequations}
Using recurrence relations \reff{de39}, we find first $2N$ ($N=3,4,5$)polynomials
\begin{align}\lb{de41-1}
\varphi_{0}^{(3)}(x)&=1,& \varphi_{1}^{(3)}(x)&=x-i\sqrt{3},& \varphi_{2}^{(3)}(x)&=x^2+2,\nn \\
\varphi_{3}^{(3)}(x)&=x\varphi_{2}^{(3)}(x)-\varphi_{1}^{(3)}(x),& \varphi_{4}^{(3)}(x)&=
x^3\varphi_{1}^{(3)}(x)+1,& \varphi_{5}^{(3)}(x)&=x^3\varphi_{2}^{(3)}(x);
\end{align}
\begin{align}\label{de41-2}
\varphi_{0}^{(4)}(x)&=1,& \varphi_{1}^{(4)}(x)&=x-2i,\nn\\
\varphi_{2}^{(4)}(x)&=x^2-2ix-1,& \varphi_{3}^{(4)}(x)&=x^3+2x,\nn\\
\varphi_{4}^{(4)}(x)&=x^4+x^2+2ix+1,& \varphi_{5}^{(4)}(x)&=x^5-2ix^4+3x-2i,\\
\varphi_{6}^{(4)}(x)&=x^6-2ix^5-x^4+2x^2-4ix-1,& \varphi_{7}^{(4)}(x)&=(x^4+2)\varphi_{3}^{(4)}(x);\nn
\end{align}
\begin{align}\label{de41-3}
\varphi_{0}^{(5)}(x)&=1,& \varphi_{1}^{(5)}(x)&=x,\nn\\
\varphi_{2}^{(5)}(x)&=x^2-i\sqrt{5}x-1,& \varphi_{3}^{(5)}(x)&=x^3-i\sqrt{5}x^2-2x,\nn\\
\varphi_{4}^{(5)}(x)&=x^4-i\sqrt{5}x^3-3x^2+i\sqrt{5}x+1,& \varphi_{5}^{(5)}(x)&=x^5+x^3-i\sqrt{5}x^2-2x+i\sqrt{5},\\
\varphi_{6}^{(5)}(x)&=x^6+x^2-1,& \varphi_{7}^{(5)}(x)&=x^7-i\sqrt{5}x^6-x^5+x,\nn\\
\varphi_{8}^{(5)}(x)&=x^8-i\sqrt{5}x^7-2x^6+1,& \varphi_{9}^{(5)}(x)&=x^5\varphi_{4}^{(5)}(x).\nn
\end{align}
From here and relation \reff{de04} we get the following expression for $P_N(x)$
\begin{equation}\label{de42}
P_3(x)=x^3,\quad P_4(x)=x^4+2,\quad P_5(x)=x^5.
\end{equation}
In view of \reff{de42}, the relation \reff{de06} takes the following form
 $(k=0,1,2,\, m\geq2)$
\begin{subequations}\label{de43}
\begin{eqnarray}
\varphi_{3m+k}^{(3)}(x)=\varphi_{k+3}^{(3)}(x)U_{m-1}(x^3)-\varphi_k^{(3)}(x)U_{m-2}(x^3);\label{de43-1}\\
\varphi_{4m+k}^{(4)}(x)=\varphi_{k+4}^{(4)}(x)U_{m-1}(x^4+2)-\varphi_k^{(4)}(x)U_{m-2}(x^4+2);\label{de43-2}\\
\varphi_{5m+k}^{(5)}(x)=\varphi_{k+5}^{(5)}(x)U_{m-1}(x^5)-\varphi_k^{(5)}(x)U_{m-2}(x^5). \label{4de3-3}
\end{eqnarray}
\end{subequations}

The Jacobi matrix $J^{(N)}=\basc{j^{(N)}}{j,k},$ $(N=3,4,5)$ corresponding to the
relations  \reff{de39}, has the form
\begin{equation}\label{de44}
J^{(N)}=B_N\delta_{i+1,k}+A_N\delta_{i,k}+B_N^t\delta_{i-1,k},
\end{equation}
where
\begin{subequations}\label{de45}
 \begin{eqnarray}
 A_3=\begin{pmatrix}
 i\sqrt{3}&1&0\\
 1&-i\sqrt{3}&1\\
 0&1&0
 \end{pmatrix}
 ,\qquad\qquad\qquad
 B_3=\begin{pmatrix}
 0&0&0\\
 0&0&0\\
 1&0&0
 \end{pmatrix};
\label{de45-1} \\[8pt]
 A_4=\begin{pmatrix}
 2i&1&0&0\\
 1&0&1&0\\
 0&1&-2i&0\\
 0&0&1&0
 \end{pmatrix}
 ,\qquad\qquad
 B_4=\begin{pmatrix}
 0&0&0&0\\
 0&0&0&0\\
 0&0&0&0\\
 1&0&0&0
 \end{pmatrix};
\label{de45-2} \\[8pt]
A_4=\begin{pmatrix}
 0&1&0&0&0\\
 1&i\sqrt{5}&1&0&0\\
 0&1&0&1&0\\
 0&0&1&0&1\\
 0&0&0&1&-i\sqrt{5}
 \end{pmatrix}
 ,\qquad
 B_4=\begin{pmatrix}
 0&0&0&0&0\\
 0&0&0&0&0\\
 0&0&0&0&0\\
 0&0&0&0&0\\
 1&0&0&0&0
 \end{pmatrix};
\label{de45-3}
\end{eqnarray}
\end{subequations}

Now we turn to evaluation of eigenvalues of the matrix
 $J^{(N)},\,\,(N=3,4,5),$ using the critical equation \reff{de16}.
\bigskip

\centerline{\underline{A. Discrete spectrum of Jacobi matrix $J^{(3)}$}}
\bigskip

The Jacobi matrix $J^{(3)}$ is defined by equalities
 \reff{de44} and \reff{de45-1}. From \reff{de31}, \reff{de41-1}, \reff{de42}, \reff{de34}, \reff{de10-1}, taking into account that
\begin{equation*}
a_0^{(3)}+a_1^{(3)}+a_2^{(3)}=0,\qquad a_0^{(3)}a_1^{(3)}+a_1^{(3)}a_2^{(3)}+a_0^{(3)}a_2^{(3)}=3,
\end{equation*}
we obtain that left-hand side
 $\Delta_0^{(3)}(\mu)$ of equation \reff{de16} has the following form
\begin{equation*}
\Delta_0^{(3)}(\mu)=S_0^{(3)}(\mu)-D_0^{(3)}(\mu)\,P_3(\mu)=3\mu^2(\mu^2+2).
\end{equation*}
Then the equation \reff{de16}  for  $N=3$ looks as
\begin{equation*}
\mu^2(\mu^2+2)=0.
\end{equation*}
The solutions of this equation are equal to
\begin{equation}\label{de46}
\mu_1=i\sqrt{2},\quad \mu_2=-i\sqrt{2},\quad \mu_{3,4}=0.
\end{equation}
Using the lemma 4, we have
\begin{equation*}
|\mu_1-a_0|=|i\sqrt{2}-i\sqrt{3}|<1.
\end{equation*}
It means that $\mu_1$ is an eigenvalue of the matrix $J^{(3)}.$
Further,
\begin{equation*}
|\mu_2-a_0|=|-i\sqrt{2}-i\sqrt{3}|>1,
\end{equation*}
i.e. $\mu_2$ is not an eigenvalue of the matrix $J^{(3)}.$
Now we calculate the components of vector $X_{\mu_{3}}=X_{\mu_{4}}$
for $\mu_{3,4}$. They are equal to
\begin{equation}\label{de47}
x_1=1,\, x_2=-i\sqrt{3},\, x_3=2,\, x_4=i\sqrt{3},\, x_5=1,\, x_6=0,\quad x_{k+6}=-x_k,\quad\text{при}\quad k\geq0.
\end{equation}
Taking into account that  $\varphi_{k-1}^{(3)}(\mu)=x_k,$ for $k=3,4$, we have
\begin{equation*}
\left[
\left( \varphi_{3n}^{(3)}(\mu_k)\right)^2+\left( \varphi_{3n+1}^{(3)}(\mu_k)\right)^2+
\left( \varphi_{3n+2}^{(3)}(\mu_k)\right)^2
\right]\nrightarrow 0, \quad\text{при}\quad n\rightarrow\infty,
\end{equation*}
i.e. the condition \reff{de21} is not realized. Therefore $X_{\mu_{k}}\notin \ell^2$, it means that
$\mu_{3,4}$ are not  eigenvalues of the matrix $J^{(3)}$.
\bigskip

\centerline{\underline{B. Discrete spectrum of Jacobi matrix $J^{(4)}$}}

\bigskip

The Jacobi matrix $J^{(4)}$ is defined by equalities \reff{de44} and \reff{de45-2}.
Using \reff{de31}, \reff{de41-2}, \reff{de42}, \reff{de34}, \reff{de10-2}, we rewrite the equation  \reff{de16} for $N=4$
in the form
\begin{equation}\label{de48}
\mu^4(\mu^2+2)=0.
\end{equation}
The solutions of this equation are equal to
\begin{equation*}
\mu_1=i\sqrt{2},\quad \mu_2=-i\sqrt{2},\quad \mu_{3,4,5,6}=0.
\end{equation*}
In the first place we consider zero solutions  of this equation.
The components of the vector $X_{\mu_{k}}$ (for $k=3,4,5,6$) are equal to
\begin{equation*}
x_1=1,\, x_2=-\frac{i}{2},\, x_3=-1,\, x_4=-\frac{3i}{2},\, \quad x_{k+4}=x_k,\quad\text{при}\quad k\geq0.
\end{equation*}
Then partial sum of the series
\begin{equation*}
\sum_{n=0}^{\infty}\left( \varphi_{n}^{(4)}(\mu_k)\right)^2=\sum_{n=1}^{\infty}x_k^{\,\,2}
\end{equation*}
equal to
\begin{equation*}
S_1=1,\, s_2=1-\frac{i}{2},\, S_3=-\frac{i}{2},\, S_4=-2i,\quad S_{n+4}=S_n-2i, \quad\text{при}\quad n\geq1.
\end{equation*}
Thus the sequence of partial sums of series
$\sum_{n=1}^{\infty}x_k^{\,\,2}$ has not limit point as
$n\rightarrow\infty$ , i.e. the series is divirgent.
Hence, $\mu_{3,4,5,6}$ are not  eigenvalues of the matrix
$J^{(4)}.$
The components of vector $X_{\mu_{1}}=(x_1,x_2,\ldots)^t$, corresponding to
critical value  $\mu_1$, are equal to
\begin{equation*}
x_1=1,\, x_2=i(\sqrt{2}-2),\, x_3=(2\sqrt{2}-3),\, x_4=0,\, \quad x_{k+4}=(3-2\sqrt{2})x_k,\quad\text{при}\quad k\geq1.
\end{equation*}
Because
\begin{equation*}
    \|X_{\mu_{1}}\|^2=\sqrt{2},
\end{equation*}
then critical value $\mu_1=i\sqrt{2}$ is eigenvalue of the Jacobi matrix $J^{(4)}.$
The normalized eigenvector $Y_{\mu_{1}}=(y_1,y_2,\ldots)^t$ has the
following components
\begin{equation*}
y_{4k+1}=\frac{1}{\sqrt[4]{2}}\,(3-2\sqrt{2})^k,\,\,
y_{4k+2}=\frac{i(\sqrt{2}-2)}{\sqrt[4]{2}}\,(3-2\sqrt{2})^k,\,\,
y_{4k+3}=\frac{(2\sqrt{2}-3)}{\sqrt[4]{2}}\,(3-2\sqrt{2})^k,\,\,
y_{4k+4}=0,
\end{equation*}
with $k\geq0$.
Finally, the components of vector $X_{\mu_{2}}=(x_1,x_2,\ldots)^t$,
corresponding to  critical value $\mu_2=-i\sqrt{2}$, are equal to
\begin{equation*}
x_1=1,\, x_2=-i(2+\sqrt{2}),\, x_3=-(3+2\sqrt{2}),\, x_4=0,\, \quad x_{k+4}=(3+2\sqrt{2})x_k,\quad\text{при}\quad k\geq1.
\end{equation*}
Consequently,
\begin{equation*}
    \|X_{\mu_{2}}\|^2=\infty
\end{equation*}
i.e. critical value $\mu_2=-i\sqrt{2}$ is not eigenvalue of the Jacobi matrix
 $J^{(4)}.$
\bigskip

\centerline{\underline{C. Discrete spectrum of Jacobi matrix $J^{(5)}$}}

\bigskip

We consider now the Jacobi matrix  $J^{(5)}$ that is defined by equalities
 \reff{de44} and \reff{de45-3}. Using  \reff{de31}, \reff{de41-3}, \reff{de42}, \reff{de34}, \reff{de10-3},
we rewrite the equation \reff{de16} for $N=5$ in the form
\begin{equation}\label{de49}
\mu^4(\mu^4-i\sqrt{5}\mu^3-3\mu^2+i\sqrt{5}\mu+1)=0.
\end{equation}
The solutions of this equation are equal to
\begin{gather}\label{de50}
\mu_{1,2}=\frac{1}{4}\left[ \pm\sqrt{10-2\sqrt{5}}+i(1+\sqrt{5}) \right],\nn\\
\mu_{3,4}=\frac{1}{4}\left[ \pm\sqrt{10+2\sqrt{5}}+i(-1+\sqrt{5}) \right], \\
\mu_{5,\,6,\,7,\,8}=0.\nn
\end{gather}
From the same arguments as given above, we see that for
$k=\overline{5;8}$ the vector $X_{\mu_{k}}\notin \ell^2,$
i.e. corresponding critical values $\mu_{k}$ are not eigenvalues of the Jacobi matrix $J^{(5)}.$

For the critical value $\mu_{1}=\frac{1}{4}\left[ \sqrt{10-2\sqrt{5}}+i(1+\sqrt{5}) \right]$
the squared components of vector $X_{\mu_{1}}=(x_1,x_2,\ldots)^t$ are equal to
\begin{gather*}
(x_1)^2=1,\quad (x_2)^2=\frac{1-\sqrt{5}}{4}+\frac{i(1+\sqrt{5})}{8}\sqrt{10-2\sqrt{5}},\\[5pt]
(x_3)^2=\frac{\sqrt{5}-2}{2}+\frac{i(1-\sqrt{5})}{8}\sqrt{10-2\sqrt{5}},\quad (x_4)^2=\frac{\sqrt{5}-3}{2},\\[5pt]
(x_5)^2=0,\qquad (x_{k+5})^2=\frac{\sqrt{5}-3}{2}\,(x_{k})^2,\quad k\geq1,
\end{gather*}
Then $ \|X_{\mu_{1}}\|^2=2\sqrt{5}$ and consequently, $\mu_{1}$
is an eigenvalue of the Jacobi matrix   $J^{(5)}.$

Similarly, for critical value $\mu_{2}=\frac{1}{4}\left[ -\sqrt{10-2\sqrt{5}}+i(1+\sqrt{5}) \right]$ the squared
components of vector $X_{\mu_{2}}$  are equal to
\begin{gather*}
(x_1)^2=1,\quad (x_2)^2=\frac{1-\sqrt{5}}{4}-\frac{i(1+\sqrt{5})}{8}\sqrt{10-2\sqrt{5}},\\[5pt]
(x_3)^2=\frac{\sqrt{5}-2}{2}+\frac{i(\sqrt{5}-1)}{8}\sqrt{10-2\sqrt{5}},\quad (x_4)^2=\frac{\sqrt{5}-3}{2},\\[5pt]
(x_5)^2=0,\qquad (x_{k+5})^2=\frac{\sqrt{5}-3}{2}\,(x_{k})^2,\quad k\geq1.
\end{gather*}
Therefore, $ \|X_{\mu_{2}}\|^2=2\sqrt{5}$ and
 $\mu_{2}$ is an eigenvalue of the Jacobi matrix   $J^{(5)}.$

For critical value $\mu_{3}=\frac{1}{4}\left[ \sqrt{10+2\sqrt{5}}+i(\sqrt{5}-1) \right]$ the  squared
components of vector $X_{\mu_{3}}$  are equal to
\begin{gather*}
(x_1)^2=1,\quad (x_2)^2=\frac{1+\sqrt{5}}{4}+\frac{i(\sqrt{5}-1)}{8}\sqrt{10+2\sqrt{5}},\\[5pt]
(x_3)^2=-\frac{2+\sqrt{5}}{2}-\frac{i(1+\sqrt{5})}{8}\sqrt{10+2\sqrt{5}},\quad (x_4)^2=-\frac{\sqrt{5}+3}{2},\\[5pt]
(x_5)^2=0,\qquad (x_{k+5})^2=-\frac{3+\sqrt{5}}{2}\,(x_{k})^2,\quad k\geq1.
\end{gather*}
From these relations follow that $X_{\mu_{3}}\notin \ell^2,$ and
corresponding critical value $\mu_{3}$ is not eigenvalue of the Jacobi matrix $J^{(5)}.$

Finally, for critical value $\mu_{4}=-\frac{1}{4}\left[ \sqrt{10+2\sqrt{5}}-i(\sqrt{5}-1) \right]$
the squared components of vector $X_{\mu_{4}}$  are equal to
\begin{gather*}
(x_1)^2=1,\quad (x_2)^2=\frac{1+\sqrt{5}}{4}-\frac{i(\sqrt{5}-1)}{8}\sqrt{10+2\sqrt{5}},\\[5pt]
(x_3)^2=-\frac{2+\sqrt{5}}{2}+\frac{i(1+\sqrt{5})}{8}\sqrt{10+2\sqrt{5}},\quad (x_4)^2=-\frac{\sqrt{5}+3}{2},\\[5pt]
(x_5)^2=0,\qquad (x_{k+5})^2=-\frac{3+\sqrt{5}}{2}\,(x_{k})^2,\quad k\geq1.
\end{gather*}
As above, from these relations it follows that   $X_{\mu_{4}}\notin \ell^2$ and
corresponding critical value $\mu_{4}$ is not eigenvalue of the Jacobi matrix $J^{(5)}.$

\begin{thebibliography}{99}
\bibitem{1} M. Kac, P. Van Moerbeke, {\it On some periodic Toda lattices}. Proc. Nat. Acad. Sci. U.S.A. (2nd ed.),
{\bf 72}, 1627–1629  (1975).
\bibitem{2} Аптекарев А. И. {\it Асимптотические свойства многочленов,
ортогональных на системе контуров, и периодические движения цепочек Тода}.
Мат. Сб. {\bf 125(167)}, № 2(10),  231–258 (1984). ( Math. USSR-Sb., 53, pp. 233–260 (1986)).
\bibitem{3} W. Van Assche, {\it Christoffel functions and Tura'n determinants on several intervals}.
   J. Comput. and Appl. Math. {\bf 48}:1–2, 207–223 (1993).
\bibitem{4} Д. Барриос, Г. Лопес, Э. Торрано,
{\it Полиномы, порожденные трехчленным рекуррентным соотношением с
асимптотически периодическими комплексными коэффициентами}.
Мат. Сб. {\bf 186}:5, 3–34 (1995) (Mat. Sb., {\bf 186}:5, 3–34 (1995)).
\bibitem{5} J. Bazargan, I. Egorova, {\it Jacobi operator with step-like asymptotically periodic coefficients}.
   Mat. Fiz. Anal. Geom., {\bf 10}:3, 425–442 (2003).
\bibitem{6} J. Geronimo, W. Van Assche, {\it Orthogonal polynomials with asymptotically periodic recurrence сoefficients}.
J. Approx. Theory {\bf 46}, 251–283 (1986).
\bibitem{6a} J. Gilewicz, E. Leopold, {\it Zeros of polynomials and recurrence relations with
periodic coefficients}. J. Comput. Appl. Math. {\bf 107}:2, 241–255 (1999)
\bibitem{7} C.C. Grosjean, {\it The measure induced by orthogonal polynomials satisfying a recursion formula
with either constant or periodic coefficients}. Part I: {\it Constant coefficients}.
Med. Konink. Acad. Wetensch. Belgie, {\bf 48}:3, 39–60 (1986).
\bibitem{8} P Van Moerbeke, {\it The spectrum of Jacobi matrices}. Invent. Math. {\bf 37}:1, 45-81 (1976). 
\bibitem{9} F. Peherstorfer, {\it On Bernstein-Szego orthogonal polynomials on several intervals}. II.
{\it Orthogonal polynomials with periodic recurrence coefficients}. J. Approx. Theory {\bf 64}:2, 123–161 (1991).
\bibitem{10} F. Peherstorfer, R. Steinbauer, {\it  Orthogonal polynomials on arcs of the unit circle}. II.
 {\it Orthogonal polynomials with periodic reflection coefficients}. J. Approx. Theory {\bf 87}:1, 60–102 (1996).
\bibitem{11} F. Peherstorfer, R. Steinbauer, {\it Asymptotic Behaviour of Orthogonal Polynomials on the Unit Circle
with Asymptotically Periodic Reflection Coefficients}. J.  Approx. Theory {\bf 88}:3, 316–353 (1997).
\bibitem{12} A. Almendral Va'zquez, {\it The Spectrum of a Periodic Complex Jacobi Matrix Revisited}.
 J. Approx. Theory {\bf 105}:2, 344–351 (2000).
\bibitem{13} B. Beckermann, J. Gilewicz, E. Leopold, {\it Recurrence relation with periodic coefficients and
Chebyshev polynomials},
Applicationes Mathematicae {\bf 23}, 319-323  (1995).
\bibitem{14}  В.В. Борзов,  Е.В. Дамаскинский, {\it $N$-симметричные полиномы Чебышева в составной модели обобщенного
осциллятора}, ТМФ {\bf 129}, №2, 229-240 (2011).
\bibitem{15} В.В. Борзов,  Е.В. Дамаскинский, {\it  Составная модель обобщенного осциллятора. I },  ЗНС ПОМИ
{\bf 374}, 58-81 (2010).
\bibitem{16} V.V. Borzov, E.V. Damaskinsky, {\it Connection between representations of nonstandard and standard
Chebyshev oscillators}, Day on Diffraction 2010 28-34.
\bibitem{17} V.V. Borzov, E.V. Damaskinsky, {\it The differential equation for generalized parametric Chebyshev polynomials},
Days on Diffraction 2012.
\bibitem{18}   В.В. Борзов,  Е.В. Дамаскинский, {\it Дифференциальные уравнения для простейших 3-сим\-метричных полиномов
Чебышева}. Записки Научных Семинаров ПОМИ, т.398,64-86 (2012).
\end {thebibliography}
\end{document}